\newtheorem{theorem}{Theorem}
\newtheorem{lemma}[theorem]{Lemma}
\newtheorem{remark}[theorem]{Remark}
\begin{document}
\title{On Constant Gaps for the Two-way Gaussian Interference Channel}
\author{\IEEEauthorblockN{Zhiyu Cheng, Natasha Devroye\\
University of Illinois at Chicago \\
zcheng3, devroye@uic.edu
\thanks{The work of Z. Cheng and N. Devroye was partially supported by NSF under award 1053933. The contents of this article are solely the responsibility of the authors and do not necessarily represent the official views of the NSF. 
}
}}

\maketitle


\begin{abstract}
We introduce the two-way Gaussian interference channel in which there are four nodes with four independent messages: two-messages to be transmitted over a Gaussian interference channel in the $\rightarrow$ direction, simultaneously with two-messages to be transmitted over an interference channel (in-band, full-duplex) in the $\leftarrow$ direction. In such a two-way network, all nodes are transmitters and receivers of messages, allowing them to adapt current channel inputs to previously received channel outputs. We propose two new outer bounds on the symmetric sum-rate for the two-way Gaussian interference channel with complex channel gains: one under full adaptation (all 4 nodes are permitted to adapt inputs to previous outputs), and one under partial adaptation (only 2 nodes are permitted to adapt, the other 2 are restricted). We show that simple non-adaptive schemes such as the Han and Kobayashi scheme, where inputs are functions of messages only and not past outputs, utilized in each direction are sufficient to achieve within a constant gap of these fully or partially adaptive outer bounds for all channel regimes. 
\end{abstract}

\section{Introduction}

In two-way networks, multiple pairs of possibly interfering users wish to exchange pairs of messages. While this is a natural form of communication in wireless networks, from an information theoretic perspective such two-way networks are challenging to deal with and as such, most two-way exchanges are treated as two one-way exchanges. 
What makes such two-way communications challenging are the possibilities that stem from having  nodes act as both  sources and destinations of messages. This permits them to adapt their channel inputs to their past received signals. Such two-way adaptation or interaction was first considered in the point-to-point two-way channel by Shannon \cite{Shannon:1961}, but capacity remains unknown in general. 

However, encouragingly, the capacity regions of several specific point-to-point two way channel models is known. What is common to these models is  that the interaction between ones own signal and that of the other user may be resolved.  For example, in the two-way modulo 2 binary adder channel where channel outputs  $Y_1 = Y_2 = X_1 \oplus  X_2$ for binary inputs $X_1,X_2$ and  $\oplus$ modulo 2 addition, the capacity region is one bit per user per channel use, as each user is able to ``undo'' the effect of the other, something that is not possible (at least not in one channel use) for the  elusive binary multiplier channel with $Y_1=Y_2 = X_1 X_2$. In the binary modulo 2 adder channel, information independently flows in the $\rightarrow$ and the $\leftarrow$ ``directions''  and nodes need not interact, or adapt their current inputs to past outputs,  to achieve capacity.  In a similar fashion, the capacity of a two-way Gaussian point-to-point channel is equal to two parallel Gaussian channels, which may be achieved without the use of adaptation at the nodes \cite{Han:1984}. In general then, one may ask whether there exist two-way {\it networks} rather than point-to-point channels where capacity may be obtained in a similar fashion, and where adaptation does not increase the capacity region. 
\subsection{Previous work on two-way deterministic networks.} In our previous work, we have demonstrated several  examples of  multi-user two-way channels  where, even though nodes may adapt current inputs to past outputs, this is {\it not} beneficial from a capacity region perspective. 
In \cite{zcheng_allerton, zcheng_ISIT} we considered three multi-user two-way channel models:
\begin{itemize}
\item  the {\bf two-way Multiple Access / Broadcast  channel (MAC/BC)} in which there are 4 messages and 3 terminals forming a MAC channel in the $\rightarrow$ direction (2 messages) and a BC channel in the opposite $\leftarrow$ direction (2 messages); 
\item  the {\bf two-way Z channel} in which there are 6 messages and  4 terminals forming a Z channel in the $\rightarrow$ direction (3 messages) and another Z channel in the opposite $\leftarrow$ direction (3 messages); 
\item the {\bf two-way interference channel (IC)} with 4 messages and 4 terminals forming an IC in the $\rightarrow$ direction (2 messages) and another IC in the $\leftarrow$ direction (2 messages).
\end{itemize}
In particular, in \cite{zcheng_allerton} we obtained the capacity regions of the deterministic, binary modulo 2 adder models for all three channels, where it was shown that adaptation at the nodes does not increase the capacity regions beyond non-adaptive schemes. In follow-up work in \cite{zcheng_ISIT} we considered a slightly more general class of deterministic channels: the linear deterministic channels in the spirit of \cite{Avestimehr:2007:ITW}. There, we showed that again, for the two-way MAC/BC and two-way Z channels that adaptation does not increase the capacity region and the capacity region is that of two one-way channel models operating in parallel. For the two-way linear deterministic interference channel, we showed that if we allow only 2 of the four nodes to adapt (which we termed ``partial adaptation''), then the capacity region is the same as if none of the 4 nodes were able to adapt, i.e. partial adaptation is useless from a capacity perspective.

\begin{figure}
\centerline{\includegraphics[width=10cm]{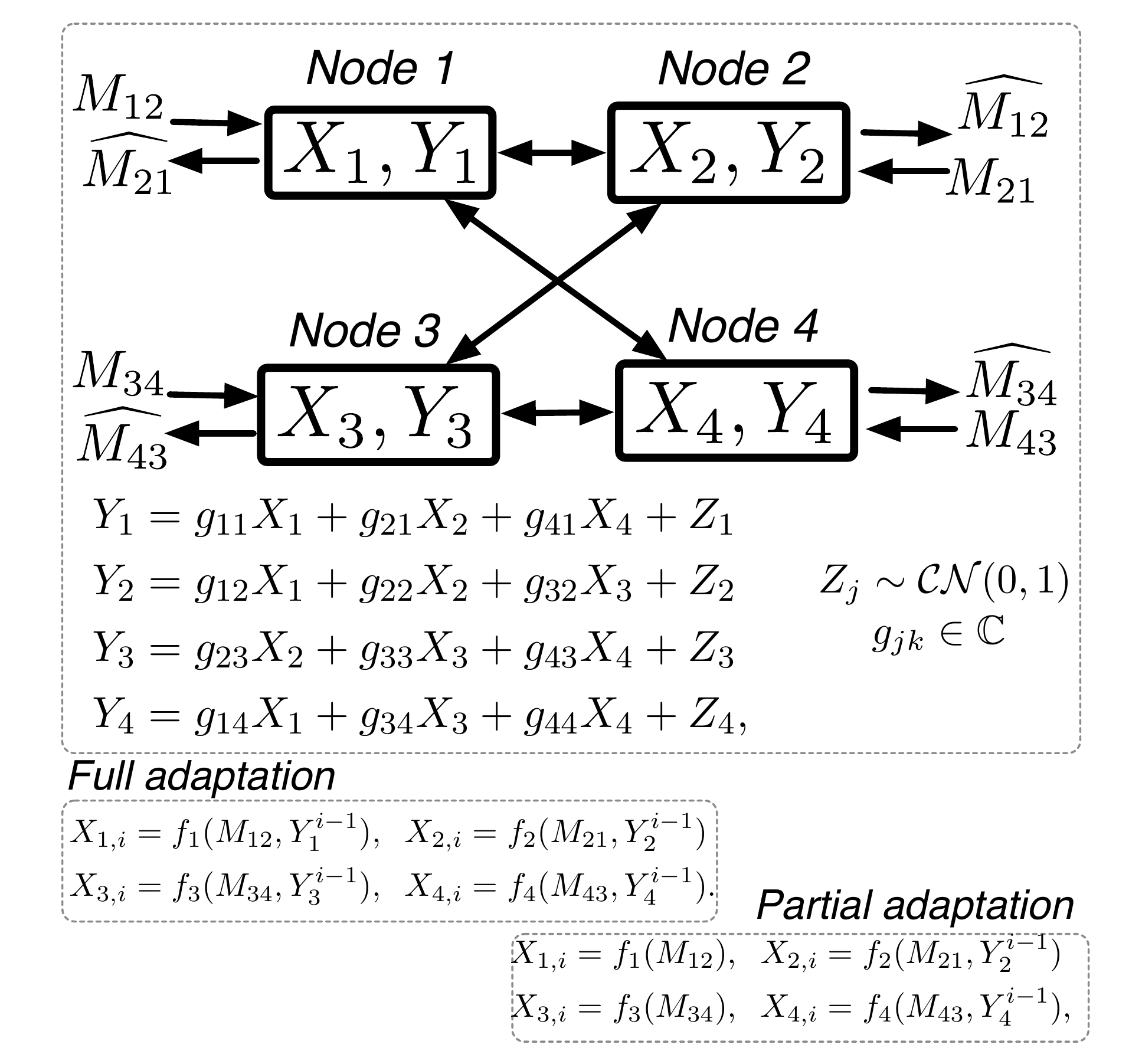}}
\vspace{-0.5cm}
\caption{The two-way Gaussian interference channel under full and partial adaptation constraints.}
\label{fig:channel}
\end{figure}

\subsection{Contributions}
In all our previous work, we considered {\it deterministic} channel models. We now consider a new (not considered before) noisy channel model: the two-way Gaussian interference channel. As a first step, we consider 
the {\bf symmetric two-way Gaussian IC} where all ``direct'' links are equal and all ``cross-over'' links are equal. We derive new, computable outer bounds for the symmetric sum-rates for this Gaussian channel model and show that: a) adaptation is useless in very strong interference for the partially adaptive model, b) in strong but not very strong interference, non-adaptive schemes perform to within 1 bit per user per direction of the fully adaptive capacity region, and c) the particular non-adaptive Han and Kobayashi scheme of \cite{etkin_tse_wang} employed in each direction, achieves to within a  constant gap (2 bits per user per direction maximally) of fully or partially adaptive outer bounds in all other regimes. In general, when all nodes are permitted to adapt, we do not believe that a non-adaptive scheme will achieve to within a constant gap for all regimes but this is left open.  
Our emphasis, as with our prior work \cite{zcheng_allerton, zcheng_ISIT} is on demonstrating when {\it adaptive} schemes do not increase capacity, and when, even if adaptation is permitted, it does not significantly increase the capacity region. 

\subsection{Related Work} 


We focus only on work related to the two-way interference channel rather than two-way channels in general; a more extensive list of references related to two-way networks may be found in \cite{zcheng_allerton, zcheng_ISIT,  zcheng_arxiv}. 

The two-way Gaussian interference channel is naturally related to one-way interference channels with/without feedback. 
The capacity region of the one-way modulo 2 adder  IC is known  \cite{ElGamalKim:book} and is a special example of a more general class of deterministic IC for which capacity is known \cite{elgamal_det_IC}, including the one-way linear deterministic IC \cite{bresler_tse}.  
The work here is also related to one-way ICs with perfect output feedback \cite{sahai2009channel, Changho2010}, with rate-limited feedback \cite{Vahid-IC-rate-FB},  and interfering feedback \cite{sahai2009channel, Suh2012}\footnote{We will refer to the 4 message two-way IC considered here as the ``two-way IC'' and the 2 message channel of \cite{Suh2012} -- considered in all sections but Section VI --  as the ``two-way interference channel with interfering feedback'' to emphasize that the rates are flowing in one direction.}. In all these channel models only two messages are present and the ``feedback'' links, whether perfect, noisy, or interfering still serve only to further rates in the forward direction. The tradeoff between sending new information versus feedback on each of the links is not addressed.  The only other example of such a 4-message two-way interference channel besides our prior work \cite{zcheng_allerton, zcheng_CISS, zcheng_ISIT} is in Section  VI of \cite{Suh2012}, where an example of a linear deterministic scheme in a specific regime is  provided which shows that, at least for one particular asymmetric linear deterministic two-way IC  with weak  interference in the $\rightarrow$ and strong interference in the $\leftarrow$ direction, that adaptation can significantly improve the capacity region over non-interaction. The general capacity region of the linear deterministic two-way IC (with 4 messages)  remains open in general despite the example in \cite{Suh2012} and the results in \cite{zcheng_ISIT}. 
This is the first work to consider the two-way Gaussian interference channel.

%

\section{Channel Model}
\label{model}

A graphical depiction of the two-way Gaussian interference channel is provided in Fig. \ref{fig:channel}. 
There are 4 nodes: transmitters 1 and 3 send messages $M_{12}$ and $M_{34}$ to receivers 2 and 4, respectively, forming an IC in the $\rightarrow$ direction. Similarly,  transmitters 2 and 4 send messages $M_{21}$ and $M_{43}$ to receivers 1 and 3 respectively, forming another IC in the $\leftarrow$ direction. All messages $M_{jk}$ from node $j$ to node $k$ are independent and uniformly distributed over $\mathcal{M}_{jk} : =\{1,2,\cdots 2^{nR_{jk}}\}$ (for appropriate $j,k$) and $R_{jk}$ is the rate of transmission from node $j$ to node $k$. 

 All channels are assumed to be memoryless and at each channel use, are described by 
 \begin{align*}
&Y_1=g_{11}X_1+g_{21}X_2+g_{41}X_4+Z_1\\
&Y_2=g_{12}X_1+g_{22}X_2+g_{32}X_3+Z_2\\
&Y_3=g_{23}X_2+g_{33}X_3+g_{43}X_4+Z_3\\
&Y_4=g_{14}X_1+g_{34}X_3+g_{44}X_4+Z_4,
\end{align*}
where $g_{jk}$, for $j,k\in \{1,2,3,4\}$ are the complex channel gains. 
 Let $X_j$ and $Y_k$ denote the channel input of node $j$ and output at node $k$ used to describe the model (per channel use). Let $X_{j,i} \; (Y_{j,i})$ denote the channel input (output) at node $j$ at channel use $i$, and $X_j^n : = (X_{1,1}, X_{1,2}, \cdots X_{1,n})$. 
We assume the power constraints $E[|X_j|^2]\leq P_j=1, j\in\{1,2,3,4\}$, and independent, identically distributed complex Gaussian noise $Z_j\sim \mathcal{CN}(0, 1)$ at all nodes  $j\in (1,2,3,4)$, which may be done without loss of generality. 

We say that the two-way Gaussian interference channel operates under ``full adaptation'' if we allow
\begin{align}
& X_{1,i}=f_{1,i}(M_{12},Y_1^{i-1}), \;\; X_{2,i}=f_{2,i}(M_{21},Y_2^{i-1})\\
& X_{3,i}=f_{3,i}(M_{34},Y_3^{i-1}), \;\; X_{4,i}=f_{4,i}(M_{43},Y_4^{i-1}),
\end{align}
for $f_{j,i}$ deterministic encoding functions for $1\leq i\leq n$ ($n$ is the blocklength). 
Similarly, it operates under ``partial adaptation'' if we only allow the following:  
\begin{align}
& X_{1,i}=f_{1,i}(M_{12})\label{eq:p1}, \;\; X_{2,i}=f_{2,i}(M_{21},Y_2^{i-1})\\
& X_{3,i}=f_{3,i}(M_{34}), \;\; X_{4,i}=f_{4,i}(M_{43},Y_4^{i-1}),\label{eq:p2}
\end{align}
i.e.  nodes 1 and 3 are ``restricted'' \cite{Shannon:1961}.  By symmetry, we may alternatively allow nodes $2$ and $4$ to be restricted and $1,3$ to be fully adaptive; whether allowing $1,2$ or $1,4$ to be restricted and the complement fully adaptive remains an open problem.
Receiver $k$ uses a decoding function $\mathcal{Y}_k^n \times \mathcal{M}_{ki} \rightarrow\mathcal{\widehat{M}}_{jk}$ to obtain an estimate $\widehat{M}_{jk}$ of the transmitted message $M_{jk}$ given knowledge of its own message(s) $M_{ki}$ for suitable $i$ (based on Fig. \ref{fig:channel}).  The capacity region is the supremum over all rate tuples for which there exist encoding and decoding functions (of the appropriate rates) which simultaneously drive the probability that any of the estimated messages is not equal to the true message, to zero as $n\rightarrow \infty$. 




Furthermore, we define ${\tt SNR}_{12}=|g_{12}|^2, {\tt SNR}_{21}=|g_{21}|^2, {\tt SNR}_{34}=|g_{34}|^2, {\tt SNR}_{43}=|g_{43}|^2$, and ${\tt INR}_{14}=|g_{14}|^2, {\tt INR}_{41}=|g_{41}|^2, {\tt INR}_{23}=|g_{23}|^2, {\tt INR}_{32}=|g_{32}|^2$. Note that we have kept the ``self-interference'' terms such as $g_{11}X_1$ in the expression of $Y_1$ (for example). In this Gaussian model, it is clear that since node $1$ knows $X_1$ we may equivalently remove this self-interference term due to the additive nature of the channel and hence including it is unnecessary. However,  we leave it in our expressions to emphasize the fact that we can cancel or subtract out a node's ``self-interference'' in all converses.  We speculate that this is one of the reasons two-way channels of this form, as seen in the Gaussian two-way channel as well \cite{Han:1984}, are easier to deal with. 

{\it Symmetric capacity.} We are interested in the symmetric capacity when all the SNRs equal a given ${\tt SNR}$, and all the INRs equal a given ${\tt INR}$. For full adaptation, due to the symmetry,  we consider the per-user rates $R_{sym} = \frac{R_{12}+R_{34}}{2} = \frac{R_{21}+ R_{43}}{2}$.  Under partial adaptation, there is only partial symmetry (nodes 1 and 3 are fixed, while 2 and 4 are not). Hence, we will consider the per user rates $R_{sym\rightarrow} = \frac{R_{12}+R_{34}}{2}$ and $R_{sym\leftarrow} = \frac{R_{21}+R_{43}}{2}$ for the $\rightarrow$ and $\leftarrow$ directions respectively. 

\section{Outer bounds}
We now present two outer bounds for the two-way Gaussian IC under full and partial adaptation respectively. These bounds are either within a constant gap, or sufficient to show the capacity depending on different regimes. We will derive general outer bounds, imposing symmetry only in the final step.

We note that while the converses and the steps are new and exploit carefully chosen genies, when we evaluate these by further outer-bounding our outer-bounds, interestingly, we  sometimes re-obtain some of the outer bounds of the interference channel \cite{etkin_tse_wang} {\it or} the interference channel with feedback \cite{Changho2010}. This  in turn is sufficient to achieve capacity to within a constant gap (which we emphasize, sometimes is limited to {\it partial} adaptation for some of the weak interference regimes but this will be explicitly mentioned when it is the case). 

We first prove a Lemma relevant in partial adaptation which is central to many of our converses.
\begin{lemma}
Under partial adaptation \eqref{eq:p1} -- \eqref{eq:p2}, for some deterministic functions $f_5$ and $f_6$, 
\begin{align}
X_{2,i}& = f_5(M_{12}, M_{21}, M_{34}, Z_2^{i-1}) \perp M_{43}, \;\; \forall i \label{eq:X2}\\
X_{4,i}& = f_6(M_{43}, M_{34}, M_{12},Z_4^{i-1}) \perp M_{21}, \;\; \forall i \label{eq:X4}
\end{align}
where $\perp$ denotes independence.
\label{lemma:partialG}
\end{lemma}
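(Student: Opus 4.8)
\noindent The plan is to prove both displays by induction on the time index $i$, and the key structural observation that makes the induction go through is that, in this channel, the two \emph{adaptive} nodes never ``hear'' each other: node $2$'s received signal $Y_2=g_{12}X_1+g_{22}X_2+g_{32}X_3+Z_2$ carries no term in $X_4$, and symmetrically $Y_4=g_{14}X_1+g_{34}X_3+g_{44}X_4+Z_4$ carries no term in $X_2$. Moreover, under partial adaptation \eqref{eq:p1}--\eqref{eq:p2} the inputs $X_{1,i}=f_{1,i}(M_{12})$ and $X_{3,i}=f_{3,i}(M_{34})$ depend on messages only, never on past outputs, so there is no indirect route (through nodes $1$ or $3$) by which $M_{43}$ can enter what node $2$ observes, nor by which $M_{21}$ can enter what node $4$ observes.

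For \eqref{eq:X2}: the base case $i=1$ is immediate, since $X_{2,1}=f_{2,1}(M_{21})$ is trivially a deterministic function of $(M_{12},M_{21},M_{34})$ with empty noise history. For the inductive step, assume $X_{2,j}$ is a deterministic function of $(M_{12},M_{21},M_{34},Z_2^{j-1})$ for all $j\le i$. Substituting the restricted forms of $X_{1,j},X_{3,j}$ and the inductive form of $X_{2,j}$ into $Y_{2,j}=g_{12}X_{1,j}+g_{22}X_{2,j}+g_{32}X_{3,j}+Z_{2,j}$ shows that each $Y_{2,j}$ with $j\le i$, and hence the vector $Y_2^{i}$, is a deterministic function of $(M_{12},M_{21},M_{34},Z_2^{i})$. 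Since $X_{2,i+1}=f_{2,i+1}(M_{21},Y_2^{i})$ under partial adaptation, it follows that $X_{2,i+1}$ is a deterministic function of $(M_{12},M_{21},M_{34},Z_2^{i})$, which closes the induction and defines the (time-dependent) function $f_5$. The argument for \eqref{eq:X4} is the symmetric one, using that $Y_4$ has no $X_2$ term, to write $X_{4,i}$ as a function of $(M_{43},M_{34},M_{12},Z_4^{i-1})$. The independence assertions then follow purely from these functional forms together with the standing assumptions that $M_{12},M_{21},M_{34},M_{43}$ are mutually independent and the noises are independent of all messages: a deterministic function of $(M_{12},M_{21},M_{34},Z_2^{i-1})$ is independent of $M_{43}$, and a deterministic function of $(M_{43},M_{34},M_{12},Z_4^{i-1})$ is independent of $M_{21}$.

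I do not expect a genuine technical obstacle here; the content is entirely conceptual, namely pinpointing why restricting nodes $1$ and $3$ is exactly what the lemma needs. If, say, node $1$ were permitted to adapt, then $X_{1,i}$ would depend on $Y_1^{i-1}$, and since $Y_1=g_{11}X_1+g_{21}X_2+g_{41}X_4+Z_1$ does contain $X_4$, the message $M_{43}$ would in general leak into $X_{2,i}$ via node $1$ and the lemma would fail. The only care required in writing the induction is to carry $M_{12}$ and $M_{34}$ along even though node $2$ has no a priori knowledge of them, because they enter $Y_2$ through the restricted inputs $X_1$ and $X_3$; the analogous remark applies to $M_{43}$ and $M_{12}$ for node $4$.
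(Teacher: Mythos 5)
Your proof is correct and follows essentially the same route as the paper: the paper also iterates (i.e.\ inducts on) the substitution $X_{2,i}=f_{2,i}(M_{21},Y_2^{i-1})$ with $Y_2^{i-1}$ expanded in terms of the restricted inputs $X_1^{i-1},X_3^{i-1}$ and the noise, anchored at $X_{2,1}=f_{2,1}(M_{21})$, and then deduces independence from the fact that $M_{43}$ is independent of every argument of the resulting function. Your explicit remarks on why $Y_2$ contains no $X_4$ term and why restricting nodes $1$ and $3$ is exactly what is needed are a helpful elaboration, but the underlying argument is the same.
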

\begin{proof}
Note that $X_{2,i} = f_2(M_{21}, Y_2^{i-1})$ and $Y_2^{i-1} = g_{12}X_{1}^{i-1} + g_{22}X_2^{i-1}+g_{32}X_3^{i-1}+Z_2^{i-1}$. Since $X_1^{i-1}$ and $X_3^{i-1}$ are functions only of $M_{12}$ and $M_{34}$ respectively, we may conclude that  there exists a function $f^*$ such that $X_{2,i} = f^*(M_{21}, M_{12}, M_{34}, X_2^{i-1},Z_2^{i-1})$. Iterating this argument, and noting that $X_{2,1}$ is only a function of $M_{21}$, we obtain the lemma. The result for $X_{4,i}$ follows similarly. That $X_{2,i}$ is independent of $M_{43}$ follows since $M_{43}$ is independent of all the arguments inside $f^*$. 
\end{proof}

\medskip

\begin{theorem} {\it Outer bound: full adaptation.}
For the two-way Gaussian symmetric IC under full adaptation, any achievable symmetric rate $R_{sym} = \frac{R_{12}+R_{34}}{2} = \frac{R_{21}+R_{43}}{2}$, achievable by each user,  satisfies, 
\begin{align}
R_{sym}
&\leq \frac{1}{2} \log \left(1+{\tt SNR}+{\tt INR}+2\sqrt{{\tt SNR}\times {\tt INR}}\right)\nonumber \\
& \;\;\;\;\; +\frac{1}{2}\log \left(1+\frac{{\tt SNR}}{1+{\tt INR}}\right) \label{R_strong_E}
\end{align}
\label{thm:outer-full}
\end{theorem}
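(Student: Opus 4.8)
The plan is to run a genie-aided Fano argument bounding the sum $R_{12}+R_{34}=2R_{sym}$ of the two rates in the $\rightarrow$ direction, patterned on the sum-rate converse for the Gaussian interference channel \emph{with feedback} \cite{Changho2010} (itself a refinement of the one-bit converse of \cite{etkin_tse_wang}); the reason the feedback converse is the right template is that, after one preprocessing step, the $\rightarrow$ sub-problem looks like a Gaussian IC in which the two forward transmitters have access to strictly causal feedback. The preprocessing is \emph{self-interference cancellation}: receiver $2$ decodes $M_{12}$ from $(M_{21},Y_2^n)$, and since $X_{2,i}$ is a deterministic function of $(M_{21},Y_2^{i-1})$ the pairs $(M_{21},Y_2^n)$ and $(M_{21},\tilde Y_2^n)$ are in bijection, where $\tilde Y_{2,i}:=g_{12}X_{1,i}+g_{32}X_{3,i}+Z_{2,i}$. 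Hence by Fano and $M_{12}\perp M_{21}$, $nR_{12}\le I(M_{12};\tilde Y_2^n\mid M_{21})+n\epsilon_n$, and symmetrically $nR_{34}\le I(M_{34};\tilde Y_4^n\mid M_{43})+n\epsilon_n$ with $\tilde Y_{4,i}:=g_{14}X_{1,i}+g_{34}X_{3,i}+Z_{4,i}$. The self-interfering inputs $X_2,X_4$ -- the variables carrying the feedback loops -- have thereby been purged from the effective outputs, which is the phenomenon the paper flags in Section~\ref{model}; note that, this being the full-adaptation bound, Lemma~\ref{lemma:partialG} plays no role here.

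Next I would give receiver $2$ a genie (and, by the $\rightarrow/\leftarrow$ symmetry of the construction, receiver $4$ a matching one), of the type used in the feedback-IC converse -- e.g.\ a noisy replica $\mathcal S^n$, $\mathcal S_i:=g_{14}X_{1,i}+Z_{4,i}$, possibly bundled with one or both backward messages, which is harmless because extra conditioning can only enlarge an outer bound and $\tilde Y_4^n=\mathcal S^n+g_{34}X_3^n$. Expanding $I(M_{12};\tilde Y_2^n,\mathcal S^n\mid\cdot)$ by the chain rule and adding the $R_{34}$ bound, the $\mathcal S^n$-entropy contributions telescope against the term that $\tilde Y_4^n=\mathcal S^n+g_{34}X_3^n$ generates, leaving (i) a point-to-point difference $h(\tilde Y_2^n)-h(Z_2^n)$ and (ii) an interference-limited conditional term. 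After single-letterizing and applying the Gaussian maximum-entropy bound, piece (i) is at most $n\log\!\big(1+\mathrm{Var}(g_{12}X_1+g_{32}X_3+Z_2)\big)$, and here the key estimate is that for unit-power but \emph{possibly correlated} $X_1,X_3$ -- correlated precisely because full adaptation lets both of them depend on past outputs -- $\mathrm{Var}(g_{12}X_1+g_{32}X_3+Z_2)\le 1+(|g_{12}|+|g_{32}|)^2$ by Cauchy--Schwarz on the cross-covariance, i.e.\ $1+{\tt SNR}+{\tt INR}+2\sqrt{{\tt SNR}\,{\tt INR}}$ after imposing symmetry; this coherent-combining term is exactly what the independence-based $({\tt SNR}+{\tt INR})$ of \cite{etkin_tse_wang} turns into once adaptation/feedback is present. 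Piece (ii) is bounded, via the MMSE/conditional-entropy estimate familiar from the one-way and feedback converses \cite{etkin_tse_wang,Changho2010} (condition further on $X_1^n$, leaving only $g_{32}X_3^n+Z_2^n$, whose entropy is at least $h(Z_2^n)$, and tune the genie to get the clean denominator), by the interference-limited rate $n\log\!\big(1+{\tt SNR}/(1+{\tt INR})\big)$. Adding (i) and (ii), imposing ${\tt SNR}_{jk}\equiv{\tt SNR}$ and ${\tt INR}_{jk}\equiv{\tt INR}$ only now, dividing by $2n$ and letting $n\to\infty$ yields \eqref{R_strong_E}.

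The main obstacle is that the $\rightarrow$ sub-problem is only \emph{morally}, not literally, a feedback IC: the ``feedback'' that transmitter $1$ exploits is $Y_1^{i-1}=g_{21}X_2^{i-1}+g_{41}X_4^{i-1}+Z_1^{i-1}$ (a noisy mixture of the backward transmissions), not a replica of receiver $2$'s output, so one must verify that the independence and time-ordering structure the feedback-IC converse relies on is still in force -- it is, because every $Z_j^n$ is i.i.d.\ and independent of all four messages and all dependencies are strictly causal -- and that self-interference cancellation is what legitimately collapses $Y_2,Y_4$ to the required form. The real work is to choose the genie and the order of conditioning so that the telescoping survives the correlation between $X_1^n$ and $X_3^n$; once that correlation is admitted (and it must be, under full adaptation) the coherent-combining term $2\sqrt{{\tt SNR}\,{\tt INR}}$ is unavoidable, which is why the bound is the feedback-IC sum-rate bound rather than the ETW one, and why a non-adaptive inner scheme is not expected to match this particular outer bound to within a constant gap in all regimes.
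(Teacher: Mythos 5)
Your high-level plan is in the right family --- a genie-aided Fano bound on $R_{12}+R_{34}$, self-interference cancellation to strip $X_2,X_4$ out of $Y_2,Y_4$, and a final single-letter bound whose first term admits coherent combining of $X_1$ and $X_3$ (giving the $2\sqrt{{\tt SNR}\times{\tt INR}}$ term) and whose second term is the MMSE-type residual $\log\bigl(1+{\tt SNR}/(1+{\tt INR})\bigr)$. You also correctly observe that Lemma~\ref{lemma:partialG} is not needed here and that the target expression coincides with the feedback-IC sum-rate form of \cite{Changho2010}. But the step you yourself flag as ``the real work'' --- choosing the genie so that the negative entropy terms cancel --- is exactly the step that fails for the genie you propose, and it is not a detail. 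With $\mathcal S_i=g_{14}X_{1,i}+Z_{4,i}$ given to receiver 2, the ETW-style telescoping requires (a) $h(\mathcal S^n\mid M_{12},\cdot)$ to collapse to $h(Z_4^n)$, which in the one-way IC follows because $X_1^n$ is a function of $M_{12}$, and (b) $h(\tilde Y_2^n\mid \mathcal S^n,M_{12},X_1^n,\cdot)$ to reduce to $h(g_{32}X_3^n+Z_2^n\mid\cdot)$ so that it cancels the matching positive term from the $R_{34}$ side. Under full adaptation neither holds: $X_1^n$ is \emph{not} determined by $M_{12}$ (it depends on $Y_1^{i-1}$, hence on $X_2^{i-1},X_4^{i-1},Z_1^{i-1}$), and --- more damagingly --- $X_3^n$ is correlated with $Z_4^{n}$ through the loop $Z_4\to Y_4\to X_4\to Y_3\to X_3$, so conditioning on $\mathcal S^n$ (which contains $Z_4^n$) strictly shrinks the negative term $h(g_{32}X_3^n+Z_2^n\mid\mathcal S^n,X_1^n,\ldots)$ below the positive term it is supposed to cancel, and the gap between them is not controlled. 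Strict causality and i.i.d.\ noise, which you invoke, do not repair this.

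The paper's proof is built precisely to evade this: the genie is asymmetric ($Y_2^n$ in its entirety is handed to node 4, not a noisy replica of a single cross link handed to node 2) and is supplemented by the noise genie $Z_1^n$ in the conditioning of \emph{both} mutual-information terms. With $(M_{21},Y_2^n)$ one recovers $X_2^n$; with $(M_{12},X_2^n,X_4^i,Z_1^n)$ one then reconstructs $X_1^i$ iteratively; this makes the middle entropy terms cancel exactly (step (b) of the paper's chain) and reduces the second term to $H(g_{34}X_{3,i}+Z_{4,i}\mid X_{4,i},\,g_{32}X_{3,i}+Z_{2,i})-H(Z_{4,i})$, where $g_{32}X_{3,i}+Z_{2,i}$ is obtained from the $Y_2^n$ genie rather than postulated as side information. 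Your proposal would need to be rebuilt around a genie of this kind (output-plus-noise-sequence, chosen to make the inputs reconstructible) rather than the ETW/weak-interference replica genie; as written, the telescoping you rely on does not go through.
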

\begin{proof}
It is sufficient to consider $R_{12}+R_{34}$ due to symmetry. This bound is inspired by the corresponding sum-rate bound in the linear deterministic model \cite{zcheng_ISIT}, i.e., we add asymmetric genie $Y_2^n$ at node 4. Notice the genie $Z_1^n$ in the conditioning of both terms as well. 
{\small \begin{align}
&n(R_{12}+R_{34}-\epsilon) \leq I(M_{12};Y_2^n|M_{21},M_{43},Z_1^n)\notag \\
&+I(M_{34};Y_4^n,Y_2^n|M_{12},M_{21},M_{43},Z_1^n)\notag\\
&\overset{(a)}{=}I(M_{12};Y_2^n|M_{21},M_{43},Z_1^n)+I(M_{34};Y_2^n|M_{21},M_{12},M_{43},Z_1^n)\notag\\
&+\sum_{i=1}^n[H(g_{34}X_{3,i}+Z_{4,i}|M_{21},M_{12},M_{43},Y_4^{i-1},X_4^i,Y_2^n,X_2^n,\notag \\
& \;\;\;\;\;\;\;\; Z_1^n,X_1^i)]-H(Z_4^n)\notag\\
&\overset{(b)}{\leq} \sum_{i=1}^n [H(Y_{2,i}|Y_2^{i-1},M_{21},X_{2,i})-H(Y_{2,i}|Y_2^{i-1},M_{12},M_{21},\notag\\
& \ \ M_{43},Z_1^n)+H(Y_{2,i}|Y_2^{i-1},M_{12},M_{21},M_{43},Z_1^n)-H(Z_{2,i})\notag \\
& \;\;\; +H(g_{34}X_{3,i}+Z_{4,i}|X_{4,i}, g_{32}X_{3,i}+Z_{2,i},X_1^i,X_2^n)-H(Z_{4,i})]\notag\\
&\overset{(c)}{\leq} \sum_{i=1}^nH(g_{12}X_{1,i}+g_{32}X_{3,i}+Z_{2,i}|X_{2,i})-H(Z_{2,i})\notag \\
&+H(g_{34}X_{3,i}+Z_{4,i}|X_{4,i}, g_{32}X_{3,i}+Z_{2,i})-H(Z_{4,i}) \label{(c)}
\end{align}}
In step (a), $X_1^i$ in the conditioning of the third term is constructed from ($M_{12},X_2^n,X_4^i,Z_1^n$). In step (b), we used conditioning reduces entropy, the second and the third term cancelled each other and $g_{32}X_{3,i}+Z_{2,i}$ in the conditioning of the fifth term is decoded from $Y_2^n$. In step (c), we only keep the self-interference $X_{4,i}$ and drop the terms  $X_1^i, X_2^n$ in the conditioning of the third term. We could leave these and express the outer bound in terms of correlation coefficients between the inputs (which in general may be correlated due to full adaptation). However, in subsequent steps we will seek to maximize, or outer bound this outer bound to obtain a simple analytical expression, which amounts to setting certain correlation coefficients to $0$, or equivalently, dropping the   terms  $X_1^i, X_2^n$ in the conditioning. 
Further evaluation yields \eqref{R_strong_E}, for details please refer to \cite[pg.41]{zcheng_arxiv}. 
\end{proof}

\begin{remark} 
{\it Sum-rate bound:} Note that the final, evaluated symmetric, normalized sum-rate bound in \eqref{R_strong_E} has the same form as the IC with perfect output feedback outer bound \cite[upper bound on (7)]{Changho2010}, though they are arrived at using slightly different genies. 

\end{remark}

\bigskip

\begin{theorem} {\it Outer bound: partial adaptation.}
For the two-way Gaussian IC under partial adaptation \eqref{eq:p1} -- \eqref{eq:p2}, in addition to the bounds in Theorem \ref{thm:outer-full}, any achievable rates ($R_{12},R_{21},R_{34},R_{43}$), and $R_{sym\rightarrow} = \frac{R_{12}+R_{34}}{2}$ and $R_{sym\leftarrow} = \frac{R_{21}+R_{43}}{2}$ must also satisfy, 
\begin{align}
&R_{12}\leq \log (1+{\tt SNR}_{12})\label{partial_single_rate1}\\
&R_{21}\leq \log (1+{\tt SNR}_{21})\\
&R_{34}\leq \log (1+{\tt SNR}_{34})\\
&R_{43}\leq \log (1+{\tt SNR}_{43})\label{partial_singe-rate4}
\end{align}
\begin{align}
R_{sym\rightarrow} 
 & \leq \log \left( 1+{\tt INR}+{\tt SNR}-\frac{{\tt INR}\times {\tt SNR}}{1+{\tt INR}}\right) \label{R_weak_E} 
 \end{align}
 \begin{equation}   R_{sym\leftarrow} \leq \left\{ \begin{array}{l}
\log \left(1+{\tt INR}+\frac{{\tt SNR}}{{\tt INR}}\right), \ \ \mbox{if} \ {\tt SNR}\leq {\tt INR}^3\\
\log \left(1+\frac{(\sqrt{{\tt SNR}}+\sqrt{{\tt INR}})^2}{1+{\tt INR}}\right), \ \ \mbox{if} \ {\tt SNR}>{\tt INR}^3
 \end{array}\right. \label{R_weak_BE}
\end{equation}
\label{thm:outer-partial}
\end{theorem}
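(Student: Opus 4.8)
The plan is to establish, family by family, the single-rate inequalities \eqref{partial_single_rate1}--\eqref{partial_singe-rate4}, the forward sum-rate inequality \eqref{R_weak_E}, and the backward sum-rate inequality \eqref{R_weak_BE} — the bounds of Theorem~\ref{thm:outer-full} being inherited — with each part a Fano step followed by a genie-aided single-letterization, and with Lemma~\ref{lemma:partialG} supplying the one structural fact the restriction \eqref{eq:p1}--\eqref{eq:p2} buys us: the adaptive inputs $X_2,X_4$ are, up to their own private noise, functions of message tuples. For the forward single-rate bounds I would write, e.g., $n(R_{12}-\epsilon)\le I(M_{12};Y_2^n\mid M_{21})\le I(M_{12};Y_2^n\mid M_{21},M_{34})$ (the second step free since $M_{12}\perp M_{34}$) and single-letterize while conditioning on $Y_2^{i-1}$, which makes $X_{2,i}=f_{2,i}(M_{21},Y_2^{i-1})$ and $X_{3,i}=f_{3,i}(M_{34})$ known, so the channel seen by $M_{12}$ collapses to $g_{12}X_{1,i}+Z_{2,i}$ and concavity of $\log$ with the power constraint gives $R_{12}\le\log(1+{\tt SNR}_{12})$; $R_{34}$ is the mirror image. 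For the backward single-rate bounds, e.g.\ $M_{21}$, the receiver (node~1) is non-adaptive but the interferer $X_4$ is adaptive; Lemma~\ref{lemma:partialG} makes $X_4^n$ a function of $(M_{43},M_{34},M_{12},Z_4^{n-1})$, hence independent of $M_{21}$ given $M_{12}$, so it is a zero-cost genie for node~1: $n(R_{21}-\epsilon)\le I(M_{21};Y_1^n\mid M_{12})\le I(M_{21};Y_1^n\mid X_4^n,M_{12})$, and after subtracting the now-known $g_{11}X_1^n,g_{41}X_4^n$ the positive entropy is at most $\log(1+{\tt SNR}_{21})$ per symbol while the negative entropy is at least $h(Z_1^n)$ (as $Z_1^n\perp(X_2^n,X_4^n,M_{12},M_{21})$), giving $R_{21}\le\log(1+{\tt SNR}_{21})$, and $R_{43}$ analogously.

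For the forward sum-rate bound \eqref{R_weak_E}: under partial adaptation the forward receivers $2,4$ can cancel their own (self-known) backward transmissions, so I would bound $R_{12}+R_{34}$ by a genie-aided converse in the $\rightarrow$ direction of the same flavor as the one-way IC-with-feedback converse of \cite{Changho2010,Suh2012}. Starting from $n(R_{12}+R_{34}-\epsilon)\le I(M_{12};Y_2^n)+I(M_{34};Y_4^n)$ plus genie terms, I hand one forward receiver the other's interference-plus-noise observation as a genie, expand by the chain rule so a received-signal term cancels a genie term, single-letterize, and maximize the residual — a sum of two conditional differential entropies in ${\tt SNR},{\tt INR}$ and the input correlations — over jointly Gaussian inputs, where concavity pins the extremal correlations. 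This collapses the bound to $R_{12}+R_{34}\le 2\log\!\left(1+{\tt INR}+\frac{{\tt SNR}}{1+{\tt INR}}\right)$; dividing by $2$ and using $1+{\tt INR}+{\tt SNR}-\frac{{\tt INR}\times{\tt SNR}}{1+{\tt INR}}=1+{\tt INR}+\frac{{\tt SNR}}{1+{\tt INR}}$ gives \eqref{R_weak_E}, which matches the Suh--Tse feedback sum-rate bound in the spirit of the remark after Theorem~\ref{thm:outer-full}.

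For the backward sum-rate bound \eqref{R_weak_BE}: the decisive use of Lemma~\ref{lemma:partialG} is that handing the forward messages $(M_{12},M_{34})$ to the $\leftarrow$-receivers $1,3$ as a genie simultaneously (i)~removes the self-interference $g_{11}X_1,g_{33}X_3$ and (ii)~renders $X_2^n$ and $X_4^n$ conditionally independent, since given $(M_{12},M_{34})$ they depend only on the disjoint independent sets $(M_{21},Z_2^{n-1})$ and $(M_{43},Z_4^{n-1})$ — and the residual dependence on the private noise histories $Z_2^{i-1},Z_4^{i-1}$ is just transmitter-side private randomness, which can be dropped. I would then also give one receiver a noisy copy of the other user's interfering signal, tuning the genie-noise level so that this receiver can additionally decode the other $\leftarrow$ message, whence $R_{21}+R_{43}\le\frac1n I(M_{21},M_{43};Y_1^n,\text{genie}\mid M_{12},M_{34})$; single-letterizing and evaluating under Gaussian inputs then forces a comparison between the direct link and the noisy cross link, and the two admissible tunings of the genie-noise — one useful when the cross link dominates, one when the direct link does — are exactly what produces the split ${\tt SNR}\le{\tt INR}^3$ versus ${\tt SNR}>{\tt INR}^3$ in \eqref{R_weak_BE} (the two resulting estimates agreeing at ${\tt SNR}={\tt INR}^3$, so the bound is continuous). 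Symmetry is imposed only in this last step.

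The routine parts are the single-rate bounds and the term-cancellation bookkeeping in the sum-rate converses; the real work is \eqref{R_weak_BE} — specifically, choosing the degrading genie for the ``extra'' receiver so that the ``both $\leftarrow$ messages decodable'' requirement holds while the bound stays as tight as possible, pinning down in which regime each choice is admissible (the source of the ${\tt INR}^3$ crossover), checking that the noise-history dependence of $X_2,X_4$ truly costs nothing, and carrying out the ensuing Gaussian optimization. A subtlety shared by all the sum-rate steps is that under adaptation the channel inputs may be correlated, so each maximization ranges over correlation coefficients and must be shown by concavity to be attained at a boundary.
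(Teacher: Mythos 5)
Your single-rate bounds and your forward sum-rate bound follow essentially the paper's route: the same Fano-plus-genie structure, the genie $g_{14}X_1^n+Z_4^n$ (resp.\ $g_{32}X_3^n+Z_2^n$) handed to the other forward receiver, and the same algebraic identity $1+{\tt INR}+{\tt SNR}-\frac{{\tt INR}\,{\tt SNR}}{1+{\tt INR}}=1+{\tt INR}+\frac{{\tt SNR}}{1+{\tt INR}}$. (One small correction there: the paper identifies \eqref{R_weak_E} with the Etkin--Tse--Wang bound for the one-way IC \emph{without} feedback, not with the Suh--Tse feedback bound; the reason the forward direction collapses to the ETW form is that $\lambda_{13}=E[X_1X_3^*]=0$ automatically, since under partial adaptation $X_1,X_3$ are functions of independent messages.)

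The genuine gap is in \eqref{R_weak_BE}, which is exactly the part you flag as ``the real work.'' The paper's mechanism for the ${\tt SNR}\lessgtr{\tt INR}^3$ split is not a choice between two genie-noise levels. It is the constrained maximization over the \emph{unconditional} correlation $\lambda_{24}=E[X_2X_4^*]$, which cannot be set to zero: both $X_2$ and $X_4$ adapt to received signals driven by the common forward transmissions, so after the entropy terms are bounded by Gaussian maximum entropy in terms of second moments, a free parameter $\lambda_{24}$ survives in the conditional variance, giving \eqref{lambda}. Optimizing over $|\lambda_{24}|\in[0,1]$ and the phase $\theta$ yields the interior maximizer $|\lambda_{24}|=\sqrt{{\tt SNR}\,{\tt INR}}/{\tt INR}^2$, which is feasible precisely when ${\tt SNR}\le{\tt INR}^3$ (first branch) and otherwise saturates at $|\lambda_{24}|=1$ (second branch). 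Your proposal instead asserts that conditioning on $(M_{12},M_{34})$ makes $X_2^n\perp X_4^n$ and that the noise-history dependence ``can be dropped.'' The conditional independence is true (it is essentially Lemma \ref{lemma:partialG}), but it does not transfer to the quantity that actually enters the entropy bound: the maximum-entropy step is in terms of unconditional covariances, and $E[X_2X_4^*]=E\bigl[E[X_2|M_{12},M_{34}]\,E[X_4^*|M_{12},M_{34}]\bigr]$ need not vanish. If your shortcut were carried through it would produce the ETW-form bound \eqref{R_weak_E} for the backward direction as well, i.e.\ a strictly \emph{stronger} claim than \eqref{R_weak_BE} (the paper notes \eqref{R_weak_BE} is always at least \eqref{R_weak_E}), which you have not justified and which is not the theorem being proved. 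As written, your argument neither reproduces the two expressions in \eqref{R_weak_BE} nor explains why the threshold sits at ${\tt SNR}={\tt INR}^3$; you would need to keep $\lambda_{24}$ as a free parameter and carry out the constrained optimization.
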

\begin{proof}
For the single-rate bounds, it is sufficient to show the first two due to symmetry (notice that we must treat the $\rightarrow$ and $\leftarrow$ directions separately however due to the asymmetry of partial adaptation).
\begin{align*}
&n(R_{12}-\epsilon)\leq I(M_{12};Y_2^n|M_{21},M_{34})\\
&\leq H(Y_2^n|M_{21},M_{34})-H(Y_2^n|M_{21},M_{34},M_{12},X_1^n,X_2^n,X_3^n)\\
&\overset{(a)}{\leq} \sum_{i=1}^n [H(Y_{2,i}|Y_2^{i-1},M_{21},X_{2,i},M_{34},X_{3,i})-H(Z_{2,i})]\\
&\leq \sum_{i=1}^n [H(g_{12}X_{1,i}+Z_{2,i})-H(Z_{2,i})]\\
&\leq \sum_{i=1}^n [\log (1+{\tt SNR}_{12})] 
\end{align*}
\begin{align*}
 &n(R_{21}-\epsilon)\leq I(M_{21};Y_1^n|M_{12},M_{43},M_{34},Z_4^{n-1})\\
& \leq H(Y_1^n|M_{12},M_{34},M_{43},Z_4^{n-1})\\
&\;\;\;\;- H(Y_1^n|M_{12},M_{34},M_{43},Z_4^{n-1},M_{21},X_1^n,X_2^n,X_4^n)\\
&\overset{(b)}{\leq} \sum_{i=1}^n [H(Y_{1,i}|M_{12},M_{34},M_{43},Z_4^{n-1},Y_1^{i-1},X_{1,i},X_{4,i})\\
& \;\;\;\;\;\;-H(Z_{1,i})]\\
&\leq \sum_{i=1}^n [H(g_{21}X_{2,i}+Z_{1,i})-H(Z_{1,i})]\\
&\leq \sum_{i=1}^n [\log (1+{\tt SNR}_{21})]
\end{align*}
where (a) follows from the definition of partial adaptation and (b) follows similarly, and by Lemma \ref{lemma:partialG}.

For the $\rightarrow$ direction of the symmetric rate, due to space constraints, we refer the reader to the in-detail converse on pg. 28 of \cite{zcheng_arxiv}; the key starting and ending steps are as follows:
\begin{align}
&n(R_{12}+R_{34}-\epsilon)\leq I(M_{12};Y_2^n,g_{14}X_1^n+Z_4^n,M_{21},M_{43})\notag \\
& \;\;\;\; +I(M_{34};Y_4^n,g_{32}X_3^n+Z_2^n,M_{21},M_{43})\notag \\
& \leq \sum_{i=1}^n [H(g_{12}X_{1,i}+g_{32}X_{3,i}+Z_{2,i}|g_{14}X_{1,i}+Z_{4,i},X_{2,i})\notag \\
& \ \ +H(g_{34}X_{3,i}+g_{14}X_{1,i}+Z_{4,i}|g_{32}X_{3,i}+Z_{2,i},X_{4,i})\notag \\
& \;\;\;\;\;\;-H(Z_{2,i})-H(Z_{4,i})] \label{last}
\end{align}
In the first step, we have given $(g_{14}X_1^n+Z_4^n)$ and $(g_{32}X_3^n+Z_2^n)$ as side information; in the intermediate steps (\cite[pg. 28]{zcheng_allerton}), we have used the definition of partial adaptation and cancellation of certain negative entropy terms. 


To obtain \eqref{R_weak_E} we continue to outer bound \eqref{last} in terms of ${\tt SNR}$ and ${\tt INR}$, using the fact that Gaussians maximize entropy subject to variance constraints. Specifically, one may intuitively see that, if one defines $\lambda_{jk} = E[X_j X_k^*]$, that one may express \eqref{last} in terms of $\lambda_{12}, \lambda_{13}, \lambda_{14}, \lambda_{34}, \lambda_{23}$. One also notices from the conditional entropy expression in \eqref{last} that taking $ \lambda_{14} = \lambda_{23} = \lambda_{12}  = \lambda_{34} = 0$, and since $\lambda_{13} = 0$ (naturally, by partial adaptation) will maximize the outer bound. This may alternatively be worked out by calculating the conditional covariance matrices directly (as we will show for the next bound on $R_{\leftarrow}$). In this case then,  for each $i$, we may bound
\begin{align*}
&H(g_{12}X_{1}+g_{32}X_{3}+Z_{2}|g_{14}X_{1}+Z_{4},X_{2}) - H(Z_2) \\
& \leq H(g_{12}X_{1}+g_{32}X_{3}+Z_{2}|g_{14}X_{1}+Z_{4}) - H(Z_2) \\
&\leq \log 2\pi e(\mbox{Var}(g_{12}X_{1}+g_{32}X_{3}+Z_{2}|g_{14}X_{1}+Z_{4}))\\
& \;\;\;\; -\log 2\pi e(\mbox{Var}(Z_2)) \\
& \leq \log\left(1+{\tt SNR}+{\tt INR} - \frac{{\tt SNR}\times {\tt INR}}{1+{\tt INR}}\right),
\end{align*}
which together with the symmetric expressions for the second and fourth terms in \eqref{last} yield \eqref{R_weak_E}.

\bigskip
For the $\leftarrow$ direction, we again defer the reader to \cite[pg.29]{zcheng_allerton} due to space constraints, but we are 
similarly able to obtain:
\begin{align}
&n(R_{21}+R_{43}-\epsilon)\leq I(M_{21};Y_1^n,g_{23}X_2^n+Z_3^n,M_{12},M_{34})\notag \\
&+I(M_{43};Y_3^n,g_{41}X_4^n+Z_1^n,M_{12},M_{34})\notag \\
&\leq \sum_{i=1}^n [H(g_{21}X_{2,i}+g_{41}X_{4,i}+Z_{1,i}|g_{23}X_{2,i}+Z_{3,i},X_{1,i})\notag \\
& \ \ +H(g_{43}X_{4,i}+g_{23}X_{2,i}+Z_{3,i}|g_{41}X_{4,i}+Z_{1,i},X_{3,i})\notag \\
& \;\;\;\; -H(Z_{1,i})-H(Z_{3,i})]\label{last2}
\end{align}
There are some slight differences in the converse,  compared to the previous outer bound due to the partial adaptation constraints (and hence more care must be taken when constructing $X_{2,i}, X_{4,i}$). 

We again proceed to outer bound \eqref{last2} to obtain \eqref{R_weak_BE}. 
It is sufficient to evaluate the first and third terms in \eqref{last2} due to symmetry. We could outer bound \eqref{last2} in terms of the conditional covariance matrices and then proceed to select values of the correlation coefficients (complex) $\lambda_{jk} : = E[X_j X_k^*]$  which maximize this outer bound. A more intuitive method is to note that again, the conditional entropies in \eqref{last2} will be maximized if $\lambda_{14} = \lambda_{32}=0$, and $\lambda_{12}=\lambda_{34}=0$, which may also be obtained by dropping $X_{1,i}, X_{3,i}$ in the conditioning terms. At that point, we are only left with the coefficient $\lambda_{24} = E[X_2X_4^*]$, (which in contrast to the $\rightarrow$ bound is not automatically $0$ due to the possible adaptation in the $\leftarrow$ direction. Furthermore, setting it to zero cannot be argued intuitively as we see a tradeoff.) yielding the following bound for $R_{sym \leftarrow} = \frac{R_{21}+R_{43}}{2}$ by symmetry: 
\begin{align}
&R_{sym\leftarrow}
\leq H(g_{21}X_{2}+g_{41}X_{4}+Z_{1}|g_{23}X_{2}+Z_{3})-H(Z_{1})\notag \\
&\leq \log 2\pi e\left(\mbox{Var}(g_{21}X_{2}+g_{41}X_{4}+Z_{1}|g_{23}X_{2}+Z_{3})\right)\notag \\
& \;\;\;\; -\log 2\pi e (\mbox{Var}(Z_1))\notag \\
&\leq \log \left( 1+{\tt INR}+{\tt SNR}+2|\lambda_{24}|\cos \theta\sqrt{{\tt SNR}\times {\tt INR}}\right. \notag\\
&\left. -\frac{{\tt SNR}\times {\tt INR} + {\tt INR}^2|\lambda_{24}|^2 + 2\sqrt{{\tt SNR}}{\tt INR}^{3/2}|\lambda_{24}|\cos\theta}{1+{\tt INR}}\right)\label{lambda}
\end{align}
where $\theta$ is the angle of $g_{21}g_{41}^*\lambda_{24}$. To maximize \eqref{lambda}, we take the partials of the expression with respect to $|\lambda_{24}|$ and $\theta$ and set these to 0. For these to equal 0 for all ${\tt SNR}$ and ${\tt INR}$ we must have $\theta = 0$ and  $|\lambda_{24}|=\frac{\sqrt{{\tt SNR}\times {\tt INR}}}{{\tt INR}^2}$ (discussed next).  
Note that we must constrain $|\lambda_{24}|\in [0,1]$. In the interval $|\lambda_{24}| \in \left[0, \frac{\sqrt{\tt SNR\times INR}}{{\tt INR}^2}\right]$ one may verify that the function is increasing in $|\lambda_{24}|$. Thus, if $ \frac{\sqrt{\tt SNR\times INR}}{{\tt INR}^2} \leq 1$, $(|\lambda_{24}| =  \frac{\sqrt{\tt SNR\times INR}}{{\tt INR}^2}, \theta =0)$ maximizes \eqref{lambda}; this happens if ${\tt SNR}\leq {\tt INR}^3$, and yields the first bound in \eqref{R_weak_BE}. Otherwise, for ${\tt SNR}>{\tt INR}^3$, $(\lambda_{24} = 1, \theta=0)$ maximizes \eqref{lambda}, yielding the second equation in \eqref{R_weak_BE}. 
\end{proof}

\begin{remark} 
 The sum-rate bound for $R_{sym\rightarrow}$ of \eqref{R_weak_E} has the same form as Etkin, Tse and Wang's outer bound for one-way Gaussian interference channel \cite[(12)]{etkin_tse_wang} which is useful in weak interference.  The sum-rate bound for $R_{sym\leftarrow}$ is quite different, and we note that it may be verified that \eqref{R_weak_BE} is always at least as large as \eqref{R_weak_E}, as one might expect given the partial adaptation constraints on nodes in the $\rightarrow$ direction, but none on the nodes in the $\leftarrow$ direction.   
\end{remark}

\section{Capacity to within a constant gap}

We now demonstrate that these outer bounds, derived for the fully adaptive or partially adaptive models, may be achieved to within a constant gap or capacity by {\it non-adaptive} schemes -- i.e. simultaneous decoding or the Han and Kobayashi scheme operating in the two directions independently. We break our analysis into three sub-sections: 1) very strong interference, 2) strong interference, and 3) weak interference. The overall finite gap results are summarized in Table \ref{table:gaps}.

\subsection{Very Strong Interference: ${\tt INR}\geq {\tt SNR}(1+{\tt SNR})$}
We first show that a non-adaptive scheme may achieve the capacity for the two-way Gaussian IC under a partially adaptive model in very strong interference. For the symmetric two-way Gaussian IC, define ``very strong interference'' as the class of channels for which ${\tt INR}\geq {\tt SNR}(1+{\tt SNR})$, as in \cite[below equation (21)]{etkin_tse_wang}.
%
It is well known that the capacity region of the one-way Gaussian IC in very strong interference is that of two parallel Gaussian point-to-point channels \cite{Carleial1975}, which may be achieved by having each receiver first decode the interfering signal, treating its own as noise, subtracting off the decoded interference, and decoding its own message. Given that the interference is so strong,  this may be done without a rate penalty. We ask whether the same is true for the two-way Gaussian IC with partial adaptation. The answer is affirmative and the capacity region is given by the following theorem:


%

%




\begin{theorem}
\label{2wIC_very_strong}
The capacity region for the two-way Gaussian interference channel with partial adaptation in very strong interference is the set of rate pairs ($R_{12},R_{21},R_{34},R_{43}$), such that \eqref{partial_single_rate1}--\eqref{partial_singe-rate4} are satisfied. 
\end{theorem}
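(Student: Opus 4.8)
The plan is to establish Theorem \ref{2wIC_very_strong} by showing matching inner and outer bounds. The outer bound is immediate: the single-rate constraints \eqref{partial_single_rate1}--\eqref{partial_singe-rate4} are part of Theorem \ref{thm:outer-partial}, so any achievable rate tuple under partial adaptation must satisfy them. Hence it suffices to prove achievability of every rate tuple $(R_{12},R_{21},R_{34},R_{43})$ strictly inside the region defined by \eqref{partial_single_rate1}--\eqref{partial_singe-rate4}; that is, $R_{12} < \log(1+{\tt SNR}_{12})$, and similarly for the other three rates.

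The achievability uses a simple non-adaptive scheme operating independently in the two directions, exactly as in the one-way very strong interference argument of \cite{Carleial1975}. Each transmitter $j$ generates an i.i.d.\ Gaussian codebook $X_j^n \sim \mathcal{CN}(0,1)$ depending only on its own message (so adaptation is not used, which is allowed under \eqref{eq:p1}--\eqref{eq:p2}). The key step is the decoding argument at each receiver: I would have receiver $2$ first decode the interfering message $M_{34}$ while treating its desired signal $g_{12}X_1$ as noise, which succeeds as long as $R_{34} < \log\!\left(1 + \frac{{\tt INR}_{32}}{1+{\tt SNR}_{12}}\right) = \log\!\left(1 + \frac{{\tt INR}}{1+{\tt SNR}}\right)$; then subtract $g_{32}X_3^n$ from $Y_2^n$ and decode $M_{12}$ from the clean channel, which succeeds for $R_{12} < \log(1+{\tt SNR})$. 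The analogous two-step decoding is done at receivers $1$, $3$, and $4$. Because receiver $1$ and receiver $3$ are being sent messages $M_{21}$ and $M_{43}$ from the $\leftarrow$ direction while also experiencing interference from the $\rightarrow$ direction signals (and $X_2,X_4$ may depend on past outputs in general), one must check that under the \emph{non-adaptive} choice all four signals are mutually independent i.i.d.\ Gaussian, so the standard joint-typicality / successive-decoding error analysis goes through verbatim in each direction.

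The crux is verifying that the ``decode interference first'' step costs nothing, i.e.\ that the very strong interference condition ${\tt INR}\geq {\tt SNR}(1+{\tt SNR})$ implies $\log\!\left(1+\frac{{\tt INR}}{1+{\tt SNR}}\right) \geq \log(1+{\tt SNR})$. This is an elementary inequality: ${\tt INR} \geq {\tt SNR}(1+{\tt SNR})$ is equivalent to $\frac{{\tt INR}}{1+{\tt SNR}} \geq {\tt SNR}$, so the interference-decoding constraint is looser than the desired-signal constraint and imposes no additional penalty. Consequently, for any rates with $R_{12},R_{34} < \log(1+{\tt SNR})$ (and symmetrically $R_{21},R_{43} < \log(1+{\tt SNR})$), all four decoding steps succeed with vanishing error probability as $n\to\infty$, matching the outer bound. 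The main obstacle — really the only subtle point — is the bookkeeping that in the two-way setting the restricted nodes $1,3$ are non-adaptive by assumption and we \emph{choose} nodes $2,4$ to be non-adaptive as well, so that the channel decouples into two independent one-way very strong interference channels; once this is observed, the result reduces to \cite{Carleial1975} applied twice.
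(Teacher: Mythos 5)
Your proposal is correct and takes essentially the same approach as the paper: the outer bound is inherited directly from Theorem \ref{thm:outer-partial}, and achievability uses non-adaptive i.i.d.\ Gaussian codebooks with self-interference cancellation followed by decode-interference-first successive decoding, reducing to \cite{Carleial1975} applied once per direction. The paper states this in two sentences; you merely fill in the explicit rate constraints and the elementary inequality showing the very strong interference condition makes the interference-decoding step costless.
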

\begin{proof}
Each node may ignore its ability to adapt,  and rather transmit using a ${\cal CN}(0,1)$ Gaussian random code. Each receiver may cancel its own self-interference, and then proceed to decode first the single interfering term before decoding its own message. This standard non-adaptive scheme may achieve the outer bound in \eqref{partial_single_rate1}--\eqref{partial_singe-rate4} in Theorem \ref{thm:outer-partial}.
\end{proof}

Interestingly, the capacity region of the two-way Gaussian interference channel with partial adaptation in very strong interference, is equivalent to the capacity regions of two one-way Gaussian interference channels with very strong interference in parallel and is achieved using a non-adaptive scheme. This allows us to conclude that {\it partial} adaptation is useless in this symmetric and very strong interference regime.

\subsection{Strong Interference: ${\tt SNR}\leq {\tt INR}\leq {\tt SNR}(1+{\tt SNR})$}
In this regime, we are able to show that a non-adaptive scheme may achieve capacity to within a constant gap of any {\it fully} adaptive scheme (in contrast to any {\it partially} adaptive scheme in the last subsection).  A symmetric two-way Gaussian IC,  as in \cite{etkin_tse_wang}, is said to be in ``strong interference'' when ${\tt INR}\geq {\tt SNR}$.

The capacity region of one-way Gaussian interference channel in strong interference is given by \cite{sato_strong}, and for symmetric channels, the capacity region when the interference is strong but not very strong, i.e. ${\tt SNR}\leq {\tt INR}\leq {\tt SNR}(1+{\tt SNR})$,  may be written as 
\begin{align}
R_{sym} = \frac{R_{12}+R_{34}}{2} \leq \frac{1}{2}\log (1+{\tt SNR}+{\tt INR}).\label{R_sato}
\end{align} 
We note that this rate is achievable for the two-way Gaussian IC  by using the simultaneous non-unique decoding scheme for the interference channel in strong interference \cite{ElGamalKim:book, sato_strong, ahlswede1974}) in the $\rightarrow$ and $\leftarrow$ directions, and noting that any self-interference may be canceled. This is a non-adaptive scheme.

We will show that this non-adaptive scheme which achieves \eqref{R_sato} in each direction (i.e. $R_{sym} =$\eqref{R_sato}) also achieves to within 1 bit (per user, per direction) of our fully adaptive outer bound \eqref{R_strong_E} in strong but not very strong interference. 

\begin{theorem}
The capacity region for two-way symmetric Gaussian interference channel with full adaptation in strong (but not very strong) interference is within 1 bit to \eqref{R_sato} (per user, per direction).
\end{theorem}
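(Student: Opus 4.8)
The plan is to show that the achievable symmetric rate $R_{sym} = \frac{1}{2}\log(1+{\tt SNR}+{\tt INR})$ from \eqref{R_sato}, realized non-adaptively in each direction by the strong-interference simultaneous-decoding scheme, comes within $1$ bit per user per direction of the fully adaptive outer bound \eqref{R_strong_E}. Since both quantities are symmetric and apply per direction, it suffices to bound the single scalar difference
\begin{align*}
\Delta &:= \frac{1}{2}\log\!\left(1+{\tt SNR}+{\tt INR}+2\sqrt{{\tt SNR}\times{\tt INR}}\right)
 + \frac{1}{2}\log\!\left(1+\frac{{\tt SNR}}{1+{\tt INR}}\right)\\
 &\quad - \frac{1}{2}\log\!\left(1+{\tt SNR}+{\tt INR}\right),
\end{align*}
and to prove $\Delta \le 1$ for all ${\tt SNR},{\tt INR}$ with ${\tt SNR}\le{\tt INR}\le{\tt SNR}(1+{\tt SNR})$.

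First I would split $\Delta$ into two logarithmic pieces. For the first piece, write $2\sqrt{{\tt SNR}\times{\tt INR}} \le {\tt SNR}+{\tt INR}$ by AM--GM, so $1+{\tt SNR}+{\tt INR}+2\sqrt{{\tt SNR}\times{\tt INR}} \le 1 + 2({\tt SNR}+{\tt INR}) \le 2(1+{\tt SNR}+{\tt INR})$; hence
\[
\frac{1}{2}\log\!\left(1+{\tt SNR}+{\tt INR}+2\sqrt{{\tt SNR}\times{\tt INR}}\right) - \frac{1}{2}\log(1+{\tt SNR}+{\tt INR}) \le \frac{1}{2}.
\]
For the second piece, I would bound $\frac{{\tt SNR}}{1+{\tt INR}} \le \frac{{\tt SNR}}{1+{\tt SNR}} \le 1$ using ${\tt INR}\ge{\tt SNR}$, so $\frac{1}{2}\log(1+\frac{{\tt SNR}}{1+{\tt INR}}) \le \frac{1}{2}\log 2 = \frac{1}{2}$. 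Adding the two contributions gives $\Delta \le 1$, which is exactly the claimed gap.

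To finish, I would note that the converse direction is already supplied: \eqref{R_strong_E} is a valid outer bound even for fully adaptive schemes by Theorem \ref{thm:outer-full}, while the achievability of \eqref{R_sato} in each direction simultaneously is the non-adaptive strong-interference scheme described just above the theorem (each node transmits a $\mathcal{CN}(0,1)$ codeword, cancels its own self-interference, and both receivers in a given direction perform simultaneous non-unique decoding of the two forward messages). Since adaptation cannot hurt, the true fully adaptive capacity lies between \eqref{R_sato} and \eqref{R_strong_E}, and the above shows these differ by at most $1$ bit per user per direction, proving the theorem.

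The computation here is entirely routine; the only mild subtlety — and the step I would be most careful about — is making sure the regime hypothesis ${\tt SNR}\le{\tt INR}\le{\tt SNR}(1+{\tt SNR})$ is actually used correctly. The upper constraint ${\tt INR}\le{\tt SNR}(1+{\tt SNR})$ is what keeps us out of the "very strong" regime (where the better Theorem \ref{2wIC_very_strong} applies), but it is the lower constraint ${\tt INR}\ge{\tt SNR}$ that does the real work in bounding the $\frac{1}{2}\log(1+\frac{{\tt SNR}}{1+{\tt INR}})$ term by $\frac12$; without it that term could be as large as $\frac12\log(1+{\tt SNR})$, which is unbounded. One should double-check that no tighter dependence on the regime boundaries is needed, i.e. that the crude AM--GM bound in the first piece does not already exceed what a more careful estimate would give — but since we only need a $1$-bit gap, the slack is comfortable.
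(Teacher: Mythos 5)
Your proposal is correct and follows essentially the same route as the paper: both bound the first term via $1+{\tt SNR}+{\tt INR}+2\sqrt{{\tt SNR}\times{\tt INR}}\leq 2(1+{\tt SNR}+{\tt INR})$ (your AM--GM step) to collect $\tfrac12$ bit, and both use ${\tt INR}\geq{\tt SNR}$ to bound $\tfrac12\log\bigl(1+\tfrac{{\tt SNR}}{1+{\tt INR}}\bigr)$ by another $\tfrac12$ bit. No substantive difference.
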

\begin{proof}
\begin{align*}
& \eqref{R_strong_E} - \eqref{R_sato}\\
&\overset{(a)}{\leq} \frac{1}{2} \log 2(1+{\tt SNR}+{\tt INR})+\frac{1}{2}\log \left(1+\frac{{\tt SNR}}{1+{\tt INR}}\right)\\
&-\frac{1}{2}\log (1+{\tt SNR}+{\tt INR})\\
&\overset{(b)}{\leq} \frac{1}{2}+\frac{1}{2}\log \left(1+\frac{{\tt INR}}{{\tt INR}}\right)=1
\end{align*}
In step (a), we use the fact that $1+{\tt SNR}+{\tt INR}+2\sqrt{{\tt SNR}\times {\tt INR}}\leq 2(1+{\tt SNR}+{\tt INR})$. Step (b) follows from the condition of strong interference ${\tt INR}\geq {\tt SNR}$. Notice that the bound \eqref{R_strong_E} is valid for the symmetric assumptions of full adaptation;  we thus conclude that the non-adaptive schemes' gap to the fully adaptive outer bound for each user, for each direction is at most 1 bit.
\end{proof}

\subsection{Weak Interfererence: ${\tt INR}\leq {\tt SNR}$}
In the following we demonstrate that the well known Han and Kobayashi scheme employed in parallel in the $\rightarrow$ and $\leftarrow$ directions may achieve to within a constant number of bits of the fully or partially adaptive (depends on the channel regimes, or relative ${\tt SNR}$ and ${\tt INR}$ values) capacity region for the two-way Gaussian IC. 

\begin{theorem}
 A non-adaptive scheme may achieve to within a $2$ bit per user per direction of partially adaptive capacity region for the two-way Gaussian IC in weak interference. In some channel regimes, this non-adaptive scheme also achieves to within a constant gap of any {\it fully} adaptive scheme.
\end{theorem}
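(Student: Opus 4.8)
The plan is to use, as the non-adaptive scheme, the Han--Kobayashi scheme with the Etkin--Tse--Wang power split \cite{etkin_tse_wang} run independently in the $\rightarrow$ and $\leftarrow$ directions, with every node first subtracting its own (known) self-interference. After this cancellation the $\rightarrow$ channel is exactly a one-way symmetric Gaussian IC with direct gains $g_{12},g_{34}$ and cross gains $g_{32},g_{14}$ (parameters ${\tt SNR},{\tt INR}$), the $\leftarrow$ channel is another such IC, and the two are independent since the inputs depend on disjoint message sets. Hence the scheme achieves in each direction the one-way symmetric Gaussian IC rate $R^{\mathrm{HK}}$, which by \cite{etkin_tse_wang} is within one bit per user of the one-way symmetric capacity and exhibits the usual generalized-degrees-of-freedom (``W-curve'') behaviour.

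Next I would record that the partially adaptive capacity region lies inside the intersection of all the bounds of Theorems \ref{thm:outer-full} and \ref{thm:outer-partial}: the single-rate bounds \eqref{partial_single_rate1}--\eqref{partial_singe-rate4}; the $\rightarrow$ bound \eqref{R_weak_E}, which has the same form as the Etkin--Tse--Wang one-way sum-rate bound \cite[(12)]{etkin_tse_wang}; the $\leftarrow$ bound \eqref{R_weak_BE}; and the fully adaptive bound \eqref{R_strong_E}, which by the converse of Theorem \ref{thm:outer-full} also upper bounds $R_{sym\rightarrow}$ and $R_{sym\leftarrow}$. Two elementary comparisons would then do most of the work: (i) in weak interference, \eqref{R_strong_E} exceeds the ordinary Etkin--Tse--Wang genie-aided sum-rate outer bound by at most one bit per user, the only difference being the $2\sqrt{{\tt SNR}\cdot{\tt INR}}$ term together with $2\sqrt{{\tt SNR}\cdot{\tt INR}}\le{\tt SNR}+{\tt INR}$; and (ii) in the relevant part of weak interference, \eqref{R_weak_BE} exceeds \eqref{R_weak_E} by at most one bit --- for the branch ${\tt SNR}\le{\tt INR}^3$ because $\frac{{\tt SNR}}{{\tt INR}}\le 2\cdot\frac{{\tt SNR}}{1+{\tt INR}}$ (valid since ${\tt INR}\ge1$ there), and for the branch ${\tt SNR}>{\tt INR}^3$ by a similar direct estimate using ${\tt INR}\le{\tt SNR}$.

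Given (i) and (ii) I would split weak interference at ${\tt INR}={\tt SNR}^{2/3}$ (equivalently ${\tt SNR}^2={\tt INR}^3$). For ${\tt INR}\le{\tt SNR}^{2/3}$ the binding one-way Etkin--Tse--Wang bound is \eqref{R_weak_E}, so $R^{\mathrm{HK}}\ge\eqref{R_weak_E}-1\ge\eqref{R_weak_BE}-2$ by (ii), giving the two-bit gap to the partially adaptive outer region in both directions. For ${\tt SNR}^{2/3}\le{\tt INR}\le{\tt SNR}$ the binding one-way bound is the genie-aided sum bound, so $R^{\mathrm{HK}}\ge\eqref{R_strong_E}-2$ by (i); since \eqref{R_strong_E} is a \emph{fully} adaptive bound, this simultaneously yields the ``within a constant gap of any fully adaptive scheme'' claim in that sub-regime. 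One also checks that the single-rate bounds \eqref{partial_single_rate1}--\eqref{partial_singe-rate4} are never the binding constraint in weak interference, which is immediate since each sum-rate bound already has per-user slope less than $\log{\tt SNR}$.

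The hard part, I expect, is making comparison (ii) airtight with the constant equal to exactly two bits rather than merely $O(1)$: \eqref{R_weak_BE} was obtained from an explicit optimization over the correlation coefficient $\lambda_{24}$ and does not collapse to any known one-way bound, so both of its branches, and the transition ${\tt SNR}={\tt INR}^3$, must be handled by hand, as must the transition ${\tt INR}={\tt SNR}^{2/3}$ between the two halves of the argument. A secondary point requiring care is that the one-bit Etkin--Tse--Wang guarantee is relative to the whole one-way capacity region, so in each sub-regime one must first verify which one-way outer bound is actually binding before adding the bit of slack coming from our (two-way) outer bounds.
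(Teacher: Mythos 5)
Your overall strategy differs from the paper's. The paper never invokes the Etkin--Tse--Wang one-bit theorem as a black box; it writes down the explicit symmetric HK rate $R_{HK}=\min\{R_{HK1},R_{HK2}\}$ from \eqref{HK} (for ${\tt INR}\ge 1$) and the separate rate \eqref{R_INR<1} (for ${\tt INR}<1$), splits cases according to which term of the min is active, and directly bounds the difference between the active achievable term and the appropriate two-way outer bound (\eqref{R_strong_E} when $R_{HK1}$ is active, giving $1.5$ bits under full adaptation; \eqref{R_weak_E} and the two branches of \eqref{R_weak_BE} when $R_{HK2}$ is active, giving $1$ and $2$ bits under partial adaptation). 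Your route --- chain ``HK is within one bit of the binding one-way bound'' with ``the binding one-way bound is within one bit of one of our two-way bounds'' --- is a legitimate alternative in principle, and your comparisons of \eqref{R_weak_BE} to \eqref{R_weak_E} and of \eqref{R_strong_E} to a sum bound without the $2\sqrt{{\tt SNR}\cdot{\tt INR}}$ term are essentially correct computations. But it is more fragile, for the reasons below.

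Two genuine gaps. First, you never handle ${\tt INR}<1$, which is part of weak interference and is treated as a separate case in the paper precisely because the HK rate expression \eqref{HK} (and your inequality (ii), which you justify ``since ${\tt INR}\ge1$ there'') breaks down; the paper uses the different achievable rate \eqref{R_INR<1} and compares it to \eqref{R_strong_E} there. Second, the link $\text{(binding one-way bound)}\le R^{\mathrm{HK}}+1$ only holds for the bound that is actually the \emph{minimum} of the ETW outer bounds at the symmetric point. In your regime ${\tt SNR}^{2/3}\le{\tt INR}\le{\tt SNR}$ you assert the binding bound is ``the genie-aided sum bound''; but the ETW genie-aided sum-rate bound has per-user generalized degrees of freedom $\max(\alpha,1-\alpha)=\alpha$ there, while the symmetric capacity (and $R^{\mathrm{HK}}$) scale as $1-\alpha/2<\alpha$, so that bound is \emph{not} within one bit of $R^{\mathrm{HK}}$ and the chain $\eqref{R_strong_E}\le \text{ETW}+1\le R^{\mathrm{HK}}+2$ fails at its second inequality. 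You would have to route the comparison through the Z-channel-type sum bound (of the form $\log(1+{\tt SNR})+\log(1+\frac{{\tt SNR}}{1+{\tt INR}})$) instead, verify it is the exact minimum in that regime, and then re-audit the constant: the quantity you actually compare \eqref{R_strong_E} to, $\frac12\log(1+{\tt SNR}+{\tt INR})+\frac12\log(1+\frac{{\tt SNR}}{1+{\tt INR}})$, is not literally an ETW bound and costs additional slack, so the accumulated gap risks exceeding the claimed $2$ bits. The paper's direct comparison of the explicit achievable expressions to the explicit two-way outer bounds avoids both issues.
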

\begin{proof}
As for the one-way IC \cite{etkin_tse_wang}, we break our proof into two regimes: ${\tt INR}\geq 1$ or ${\tt INR}<1$. 

\subsubsection{${\tt INR}\geq 1$}
Outer bounds have already been derived. Consider now using the specific choice of the Han and Kobayashi  (HK) strategy utilized for the symmetric one-way IC as in \cite[(4)]{etkin_tse_wang}  in each direction. That is, view nodes 1,2 as transmitters and 3,4 as receivers in the $\rightarrow$ direction and employ the particular choice of the HK scheme where private messages are encoded at the level of the noise, and similarly for the $\leftarrow$ direction consider nodes 3,4 as transmitters and 1,2 as receivers. Due to the additive nature of the channel and each node's ability to first cancel out their self-interference, one may achieve the following rates per user, per node for each direction when ${\tt INR}\geq 1$ for the symmetric two-way Gaussian IC:
{\small \begin{align}
& R_{HK}=\min \left\{\frac{1}{2}\log (1+{\tt INR}+{\tt SNR})+\frac{1}{2}\log \left(2+\frac{{\tt SNR}}{{\tt INR}}\right)-1, \right.\notag \\
&\left. \log \left(1+{\tt INR}+\frac{{\tt SNR}}{{\tt INR}}\right)-1\right\} =: \min\{R_{HK1}, R_{HK2}\}. \label{HK}
\end{align}}

\smallskip
\noindent {\bf If the first term in \eqref{HK} is active} we show a constant gap to the outer bound \eqref{R_strong_E},
\begin{align*}
& \eqref{R_strong_E}-R_{HK1}\\
&\leq \frac{1}{2} \log 2(1+{\tt SNR}+{\tt INR})-\frac{1}{2}\log (1+{\tt INR}+{\tt SNR})\\
&+\frac{1}{2}\log \left(1+\frac{{\tt SNR}}{{\tt INR}}\right)-\frac{1}{2}\log \left(2+\frac{{\tt SNR}}{{\tt INR}}\right)+1\\
&\leq \frac{1}{2}\log (2)+\frac{1}{2}\log (1)+1=1.5
\end{align*}

Since our bound \eqref{R_strong_E} is derived assuming full adaptation,  we may conclude that this gap holds for both $R_{sym\rightarrow}$ and $R_{sym\leftarrow}$ (i.e. holds for $R_{sym}$). 

 \smallskip
\noindent {\bf If the second term in \eqref{HK} is active,} {we use outer bound \eqref{R_weak_E} for the forward direction, to bound the gap for $R_{sym\rightarrow}$ as }
\begin{align*}
& \eqref{R_weak_E}-R_{HK2} \\
&= \log \left(\frac{{\tt INR}(1+{\tt INR})^2+{\tt SNR}\times {\tt INR}}{{\tt INR}(1+{\tt INR})^2+{\tt SNR}(1+{\tt INR})}\right)+1\\
&\leq \log (1)+1=1
\end{align*}
Since our bound \eqref{R_weak_E}  has the same form as the ETW bound \cite{etkin_tse_wang}, the capacity of the two-way Gaussian interference channel with partial adaptation in the forward direction is also to within 1 bit of the specific HK rate \eqref{HK} when ${\tt INR}\geq 1$.


{We use outer bound \eqref{R_weak_BE} for the backward direction, to bound the gap for $R_{sym\leftarrow}$,} noting that we need to consider both cases separately. 
{If the first term in \eqref{R_weak_BE} is relevant (${\tt SNR}\leq {\tt INR}^3$), one may easily conclude that $ \eqref{R_weak_BE} -R_{HK2} = 1$.}
{If the second term in \eqref{R_weak_BE} is relevant (${\tt SNR}\geq {\tt INR}^3$):} 
\begin{align*}
&  \eqref{R_weak_BE} -R_{HK2}\\
&\overset{(a)}{\leq} \log\left(\frac{2({\tt INR}+{\tt SNR}\times {\tt INR}+2{\tt INR}^2+{\tt SNR}+{\tt INR}^3)}{{\tt INR}+{\tt SNR}\times {\tt INR}+2{\tt INR}^2+{\tt SNR}+{\tt INR}^3}\right)+1\\
& = \log (2)+1=2
\end{align*} 
where (a) follows the fact that $1+{\tt SNR}+{\tt INR}+2\sqrt{{\tt SNR}\times {\tt INR}}\leq 2(1+{\tt SNR}+{\tt INR})$, and additional details may be found in \cite[pg.34]{zcheng_arxiv}.


\subsubsection{${\tt INR}<1$} In this case, a symmetric version of the HK scheme may be obtained 
from  \cite[(69)]{etkin_tse_wang}, for which each of the four users may achieve the following rate:
\begin{align}
& R_{{\tt INR}<1}\leq \log \left(1+\frac{{\tt SNR}}{1+{\tt INR}}\right)\label{R_INR<1}
\end{align}
{This achieves to within 1 bit of the outer bound \eqref{R_strong_E}:} 
\begin{align*}
&\eqref{R_strong_E}-R_{{\tt INR}<1}\\
& \leq \frac{1}{2}\log \left(\frac{2(1+{\tt SNR}+{\tt INR})(1+{\tt INR})}{1+{\tt SNR}+{\tt INR}}\right)\\
& \overset{(a)}{\leq} \frac{1}{2}\log (4)=1
\end{align*}
where (a) we use the condition of ${\tt INR}<1$. Since \eqref{R_strong_E} was obtained for full adaptation, we can conclude that the capacity of the two-way Gaussian IC is to within 1 bit to the HK region when ${\tt INR}<1$ for both directions.
\end{proof}

We summarize the constant gaps in Table \ref{table:gaps}.
\begin{table}
\resizebox{9cm}{!} {
\begin{tabular}{|c|c|c|c|c|c|}
\hline
{\bf Interference} & \multicolumn{5}{r|}{{\bf Constant Gaps per user per direction (bits)}}\\ \hline
Very Strong & \multicolumn{5}{r|}{0 (partial)}\\ \hline
Strong & \multicolumn{5}{r|}{1 (full)}\\ \hline
 & ${\tt INR}<1$ & \multicolumn{4}{r|}{1 (full)}\\ \cline{2-6}
 &  & HK1  active & \multicolumn{3}{r|}{1.5 (full)}\\ \cline{3-6}
 Weak & ${\tt INR}\geq 1$ &  & $\rightarrow$& \multicolumn{2}{r|}{1 (partial)}\\ \cline{4-6}
  & & HK2 active & $\leftarrow$ & ${\tt SNR}\leq {\tt INR}^3$ & 1 (partial)\\ \cline{5-6}
  & & & &  ${\tt SNR}> {\tt INR}^3$ & 2 (partial)\\ \hline
\end{tabular}
}
\caption{Constant gaps between non-adaptive symmetric Han and Kobayashi schemes in each direction and partially or fully adaptive outer bounds.}
\label{table:gaps}
\end{table}


\section{Conclusion}
\label{conclusion}
We have introduced the two-way Gaussian interference channel; obtained outer bounds under full and partial adaptation constraints, and shown that simple non-adaptive schemes (including the Han and Kobayashi scheme) achieve to within a constant gap for the symmetric sum-rate of these fully or partially adaptive outer bounds. We do not believe that in general, non-adaptive schemes will achieve to within a constant gap of capacity for general non-symmetric and fully adaptive two-way Gaussian ICs, this is an interesting question which is the topic of ongoing work.

\bibliographystyle{IEEEtran}
\bibliography{refs}

\end{document}